\newtheorem{lemm}{Lemma}
\title{Skewness Ranking Optimization for Personalized Recommendation}
\author{} 
\author{ {\bf Chuan-Ju Wang\thanks{~~These authors contributed equally to this work; author order was determined by seniority.}
}\\
Academia Sinica\\
Taipei, Taiwan\\
cjwang@citi.sinica.edu.tw
\And
{\bf Yu-Neng Chuang\footnotemark[1]
}  \\
National Chengchi University \\
Taipei, Taiwan\\
107753011@nccu.edu.tw
\And
{\bf Chih-Ming Chen\thanks{~~Social Networks and Human-Centered Computing, Taiwan International Graduate Program, Institute of Information Science, Academia Sinica, Taiwan.}
}   \\
National Chengchi University  \\
Taipei, Taiwan \\
104761501@nccu.edu.tw
\And
{\bf Ming-Feng Tsai}   \\
National Chengchi University \\
Taipei, Taiwan\\
mftsai@nccu.edu.tw
}
\begin{document}

\maketitle

\begin{abstract}

In this paper, we propose a novel optimization criterion that leverages
features of the skew normal distribution to better model the problem of
personalized recommendation. 
Specifically, the developed criterion borrows the concept and the flexibility
of the skew normal distribution, based on which three hyperparameters are
attached to the optimization criterion.    
Furthermore, from a theoretical point of view, we not only establish the
relation between the maximization of the proposed criterion and the shape
parameter in the skew normal distribution, but also provide the analogies and
asymptotic analysis of the proposed criterion to maximization of the area under the ROC curve. 
Experimental results conducted on a range of large-scale real-world datasets
show that our model significantly outperforms the state of the art and yields
consistently best performance on all tested datasets.
\end{abstract}

\section{INTRODUCTION}

Now ubiquitous, recommender systems are an
indispensable component of services and platforms such as music and video
streaming services and e-commerce websites.
Real-world recommender systems comprise a number of user-item interactions that
facilitate recommendations, including ratings, playing times, likes, sharing,
and tags.
In general, these interactions can be divided into explicit feedback
(e.g., in terms of ratings) and implicit feedback (e.g., monitoring clicks, view
times); in real-world scenarios, most feedback is not explicit but
implicit. 

Collaborative filtering (CF) is a commonly adopted approach that leverages
either explicit or implicit user-item interactions for item recommendation.
Many CF-based recommendation algorithms have been shown to yield reasonable
performance across various domains and have been used in many real-world
applications.
Among CF-based approaches, model-based CF has become a mainstream type of
recommendation algorithms, the core idea of which is to learn effective
low-dimensional dense representations of users and items from either explicit
or implicit feedback for recommendation.

In the model-based CF literature, latent factor models discover shared latent
factors (i.e., user/item representations) by decomposing a given user-item
interaction matrix, which has proven effective for explicit user feedback.
Matrix factorization is the most representative of this type of
approaches~\cite{mf,slim,fism}.
However, it is problematic to apply traditional matrix factorization to
implicit feedback as we can neither ignore unobserved user-item
interactions nor assume that these unobserved interactions are negative.
To address this, weighted regularized matrix factorization (WRMF)
proposed by~\cite{wrmf1,wrmf2} incorporates all the unobserved user-item
interactions as negative samples and uses a case weight to reduce the impact of
these uncertain samples.
Moreover, over the past decade, the focus of literature has shifted to optimizing
item ranks from implicit data as opposed to predicting explicit item
scores~\cite{bpr,cse,translation-based,cml,ngcf,cofactor,hop-rec,kgat}, namely
ranking-based recommendation approaches. 
Most of these approaches assume that unobserved items are of less interest to
users and are thus mainly designed to discriminate observed (positive) items
from unobserved (negative) items. 

Bayesian personalized ranking (BPR)~\cite{bpr} is a pioneering,
well-known example of ranking-based recommendation models. 
The authors propose a generic optimization criterion for personalized ranking
that maximizes the posterior probability of user preferences from pairs of
observed and unobserved items for each user.  
Later ranking-based studies such as WARP~\cite{warp} and K-OS~\cite{kos} adopt BPR's pair-wise
ranking concept, creating new variants by modifying the loss function
to better model the problem.
Moreover, for this Bayesian modeling approach to personalized ranking, these
models all leverage the assumption that the prior probability for the model
parameters is normally distributed.
Nevertheless, neither BPR itself nor later works closely investigate
the learned distribution of the estimator---a real-valued function of the
model parameters that captures the relationship between users and their observed
and unobserved items---which is however the component most related to model
performance.

Therefore, to better model the problem, we first study the learned
distributions of the estimator from different ranking-based methods, and we
observe that the realized distributions are in general unimodal and typically skewed.
As a result, we consider the skew normal distribution a good
candidate to better analyze and model the problem because of its generality. 
Particularly, there are two sides to our story.
First, we leverage features of the skew normal distribution to design a new
optimization criterion for personalized ranking.  
Second, with the assumption that the estimator follows the skew
normal distribution, we provide insights and theoretical results for the proposed optimization criterion.
Specifically, skewness ranking optimization (Skew-OPT), the optimization 
criterion we develop, is parameterized with three additional hyperparameters,
two of which are inspired by the location and scale parameters in the
skewness normal distribution and one of which is related to the shape of the
gradient function derived from the optimization objective, thereby providing additional degrees of freedom for ranking optimization.
With this design, we provide two theoretical results.
First, under the assumption that the estimator follows the skew normal
distribution with fixed location and scale parameters, maximization of the
proposed criterion simultaneously maximizes the shape parameter in the skew
normal distribution along with the skewness value of the distribution.
Second, we provide the analogies and asymptotic analysis of Skew-OPT to maximization of the area under the ROC curve. 

Extensive experiments were conducted on five representative and publicly
available recommendation datasets.
We compare our model with WRMF~\cite{wrmf1,wrmf2}, a matrix factorization based method for
implicit feedback; BPR~\cite{bpr} and WARP~\cite{warp}, two ranking-based methods;
HOP-Rec~\cite{hop-rec}, a state-of-the-art model that combines the
concept of latent factor and graph-based models; and
NGCF~\cite{ngcf}, a recent neural model for collaborative filtering. 
The evaluation shows that learning with the proposed Skew-OPT outperforms the
competing methods for all datasets, and the performance improvements are
significant by a large amount in terms of two commonly used top-$N$
recommendation evaluation metrics.
Particularly, for four out of the five datasets, our model achieves more than
10\% improvement compared to the best performing baseline models. 
In summary, the contributions of this work are:
\begin{itemize}[noitemsep]
\item We present Skew-OPT, a novel ranking optimization criterion that
significantly outperforms other state-of-the-art models for 
user-item recommendation.
\item To our best knowledge, this is the first attempt to leverage the features
and concept of the skew normal distribution to construct the optimization
criterion and analyze the model.
\item
The developed criterion is parameterized with three hyperparameters, providing
additional degrees of freedom for ranking optimization.
\item
We provide theoretical results on the relation between the maximization of
Skew-OPT and the shape parameter in the skew normal distribution as well as
the analogies and asymptotic analysis of the criterion to maximization
of the area under the ROC curve. 
\item
We report extensive experiments over five recommendation datasets to
demonstrate the robustness and effectiveness of the proposed
method.\footnote{For reproducibility, we will share the source code online at a
GitHub repo upon publication.}
\end{itemize}

\section{Personalized Recommender Systems}
The task of personalized recommendation is to provide a list of
ranked items to users based on their historical interactions with items.
Specifically, we investigate scenarios where the ranking is to be inferred
from the implicit user feedback. Below, we first formulate such a
personalized recommendation task with a representation learning approach in
Section~\ref{sec:formulation}. 
We then provide background knowledge regarding traditional Bayesian approaches
for personalized ranking, the skewness of a given distribution, and the skew
normal distribution in Section~\ref{sec:pre}.

\subsection{Formalization}\label{sec:formulation}
Let $U$ and $I$ be the sets of users and items, respectively.
Given user-item implicit feedback $S\subseteq U\times I$, our goal is to learn a
representation matrix $\Theta \in \mathbb{R}^{|U\cup I|\times|d|}$ for all
users and items such that for each user $u\in U$, we generate the top-$N$
recommended items by computing the dot products of $\theta_u$ and $\theta_i$
$\forall i\in I$, where $d$ denotes the dimension of the learned
representations, and $\theta_u$ and $\theta_i$ are the row vectors of $\Theta$
denoting the representations of user $u$ and item $i$, respectively. 
It is expected that the learned representation matrix $\Theta$ not only well matches
the observed user preferences but also predicts unobserved user
preferences.

\subsection{Preliminaries}\label{sec:pre}
\subsubsection{Bayesian Approaches for Personalized Ranking}
For personalized recommendation, conventional ranking-based
methods such as Bayesian personalized ranking (BPR)~\cite{bpr} propose 
modeling preference order by using item pairs as training data and optimizing
for the correct ranking of item pairs.
Such methods create a set of triple relations $D_S: U\times I\times I$ from
user feedback $S$ for model training by 
\begin{equation*}
D_S = \left\{ (u,i,j)\,|\,\forall u \in U, i \in I^{+}_{u}\wedge j \in I \setminus I^{+}_{u} \right\},
\end{equation*}
where $(u,i,j)\in D_S$ means that user $u$ is assumed to prefer item $i$ over
item $j$. 
For notational simplicity, we introduce notation $>_u$ to denote the pairwise
user preference for user~$u$; i.e., $i >_u j$ means that $u$ prefers item $i$
over $j$.
With the above construction, the generic optimization criterion for the
ranking-based methods is
\begin{eqnarray}
    \notag\ln P(\Theta\,|>_u) &\propto&  \ln P(>_u|\Theta)P(\Theta)\\
    \label{eq:bpr-opt} 
    &=&\ln\prod_{(u,i,j)\in D_S} P(i>_uj|\Theta)P(\Theta)\\
    \notag &=& \sum_{(u,i,j)\in D_S} \ln P(i>_uj|\Theta) + \ln P(\Theta)\\
    \notag &=& \sum_{(u,i,j)\in D_S} \ln g\left(\hat{x}_{uij}(\Theta)\right) - \lambda_\Theta \|\Theta\|^2,
\end{eqnarray}
where $\hat{x}_{uij}(\Theta)$ is an arbitrary real-valued function of the model
parameter matrix $\Theta$ capturing relationships between user $u$, item $i$,
and item $j$; $g(\cdot)$ is a function used to describe the likelihood
function $P(i>_u j|\Theta)$ for $(u,i,j)$; and $\lambda_\Theta$ is a hyperparameter
for regularization.
Note that the last equality also involves a distribution assumption on the prior
density $p(\Theta)$, which is a normal distribution with zero mean and
variance-covariance matrix $\Sigma_\Theta$ (i.e., $p(\Theta)\sim
N(0, \Sigma_\Theta)$).
For notational simplicity, below we occasionally omit argument
$\Theta$ from function $\hat{x}_{uij}$.
In BPR, $g(\cdot)$ is set to the logistic sigmoid function and the estimator
$\hat{x}_{uij}$ is decomposed to $\hat{x}_{ui}$ and $\hat{x}_{uj}$ as
$\hat{x}_{uij}=\hat{x}_{ui}-\hat{x}_{uj}$, where $\hat{x}_{ui}$ is defined as
the dot product of $\theta_u$ and $\theta_i$ (i.e.,
$\hat{x}_{ui}=\langle\theta_u,\theta_i\rangle$).
Similar to most prior art, in this paper, we follow these settings in our
model.

\subsubsection{Skewness} 

Skewness is a measure of symmetry---more precisely, the lack of symmetry---of
the probability distribution of a real-valued random variable about its mean,
the value of which can be positive, negative, or undefined.
Formally, the skewness value $\gamma$ of a random variable $X$ is the third
standardized moment, which is defined as
\begin{equation*} 
    \gamma = \mathbb{E}\left[
    \left(\frac{X-\mu}{s}\right)^3 
    \right],
\end{equation*}
where $\mu$ and $s$ denote the mean and the standard deviation of ${X}$,
respectively.
For a unimodal distribution (e.g., normal distribution), a negative skew commonly
indicates that the tail is on the left side of the distribution, and a positive
skew indicates that the tail is on the right.
In addition, a zero value signifies that the tails on both sides of the mean balance
out overall, which is always true for a symmetric distribution but can also be
true for an asymmetric distribution in which one tail is long and thin and the
other is short but fat.

\subsubsection{Skew Normal Distribution}\label{sec:skewnorml}
In probability theory and statistics, the skew normal distribution is a
continuous probability distribution that generalizes the normal distribution to
allow for non-zero skewness. 
Generally speaking, the probability density function (PDF) of a skew normal
distribution can be defined with parameters location $\xi\in
\mathbb{R}$, scale $\omega\in \mathbb{R}^+$, and shape $\alpha\in\mathbb{R}$:
\begin{equation} 
    f(x) = \frac{2}{\omega} \varphi\left(\frac{x-\xi}{\omega}\right) \Psi\left(\alpha\left(\frac{x-\xi}{\omega}\right)\right),
    \label{eq:skewnormaldensity}
\end{equation}
where $\varphi(\cdot)$ and $\Psi(\cdot)$ denote the PDF and the cumulative
distribution function (CDF) of the standard normal distribution, respectively. 
Moreover, the CDF of $X$ is 
\begin{equation}
F(x)=\Psi\left(\frac{x-\xi}{\omega}\right)-2T\left(\left(\frac{x-\xi}{\omega}\right),\alpha\right),
\label{eq:CDF}
\end{equation}
where $T(h,a)$ is Owen's $T$ function. 
Then the skewness value $\gamma$ of the skew normal distribution is a function
of $\alpha$ defined as
\begin{equation} 
    \gamma(\alpha) = \frac{4-\pi}{2} \frac{\left(\frac{\alpha}{\sqrt{1+\alpha^2}} \sqrt{\frac{2}{\pi}}\right)^3}{\left(1- \frac{2\alpha^2}{\pi(1+\alpha^2)}\right)^\frac{3}{2}}\,.
    \label{eq:skewnormal_skew}
\end{equation}

\begin{figure}
\begin{center}
\includegraphics[width=0.46\textwidth]{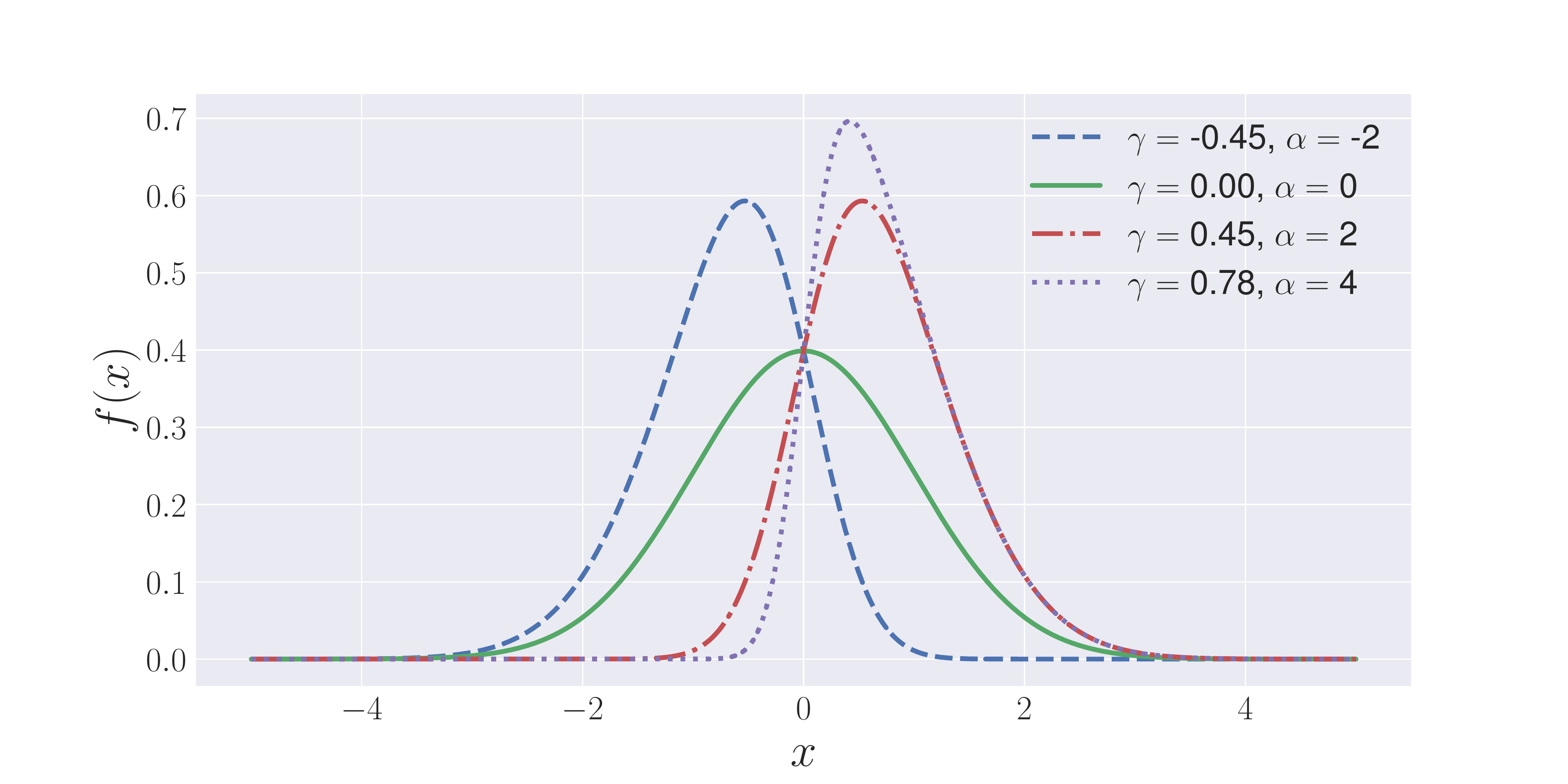}
\vspace{-0.2cm}
\caption{Skew normal distributions ($\xi=0,\omega=1$).}
\label{fig:skew_n}
\end{center}
\end{figure}

Figure~\ref{fig:skew_n} illustrates the PDFs of the skew normal distribution
with fixed location parameter $\xi=0$ and scale parameter $\omega=1$, but with
different shape parameters, i.e., $\alpha=-2,0,2,4$. 
From the figure, we observe that a larger $\alpha$ yields a larger skewness value
$\gamma$.
Moreover, with fixed $\xi$ and $\omega$, it is clear that enlarging $\alpha$
increases the probability $p(x>0)$; this argument will be later elaborated in
our method and linked to the metric AUC in Section~\ref{sec:AUC}.  

\section{Skewness Ranking Optimization (Skew-OPT)}\label{sec:method}

In this section, we first make some observations about the representations
learned from BPR, based on which we briefly explain the motivation for our work, in Section~\ref{sec:observation}.
Second, we provide a detailed derivation of the proposed optimization criterion in Section~\ref{sec:criterion}; finally Section~\ref{sec:AUC} gives the analogies between our criterion and the AUC evaluation
metric.



\subsection{Observation and Motivation}\label{sec:observation}

Most prior art for personalized ranking, such BPR~\cite{bpr} and
WARP~\cite{warp}, seeks to learn effective user and item representations for
item recommendation by maximizing the posterior probability of user
preferences from pairs of observed and unobserved items for each user.
Among these methods, BPR, the most representative work, introduces a general
prior density $p(\Theta)$ that follows a normal distribution with zero mean and
variance-covariance matrix $\lambda_\Theta I$ to complete the Bayesian modeling
approach of the personalized ranking task.
Nevertheless, neither BPR itself nor its succeeding works discuss the
distribution of the estimator (i.e., $\hat{x}_{uij}(\Theta)$), which is however
the component most related to model performance.
Figure~\ref{fig:bprorg} plots the distribution constructed by the learned
estimates for $\hat{x}_{uij}(\Theta)$ with the use of BPR training on each of
the three listed datasets.
From the figure, we observe that the three distributions are unimodal in
general and typically skewed---the distributions for Epinions-Extend and Last.fm-360K 
are
right-skewed with positive sample skewness values ($\hat{\gamma}=1.09$ and $\hat{\gamma}=0.373$ in panel
(a) and (b), respectively), and that for Amazon-Book is almost symmetric with a close-to-zero positive skewness value ($\hat{\gamma}=0.08$ in panel (c)).

\begin{figure}[!h]
    \centering
    \begin{subfigure}[b]{0.145\textwidth}
        \includegraphics[width=\textwidth]{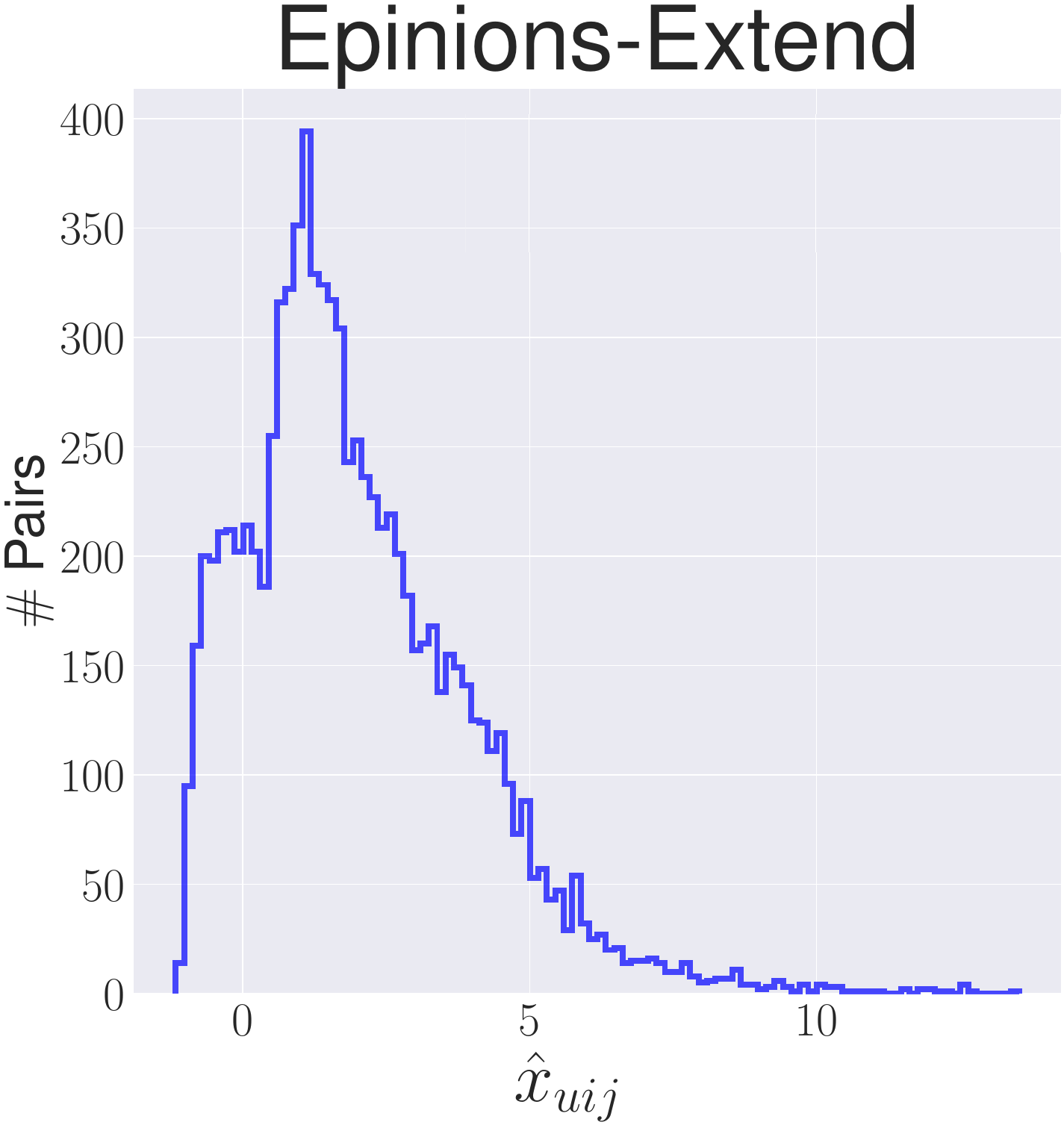}
        \caption{$\hat{\gamma} = 1.09$}
        \label{fig:rbpr}
    \end{subfigure}
    ~
    \begin{subfigure}[b]{0.145\textwidth}
        \includegraphics[width=\textwidth]{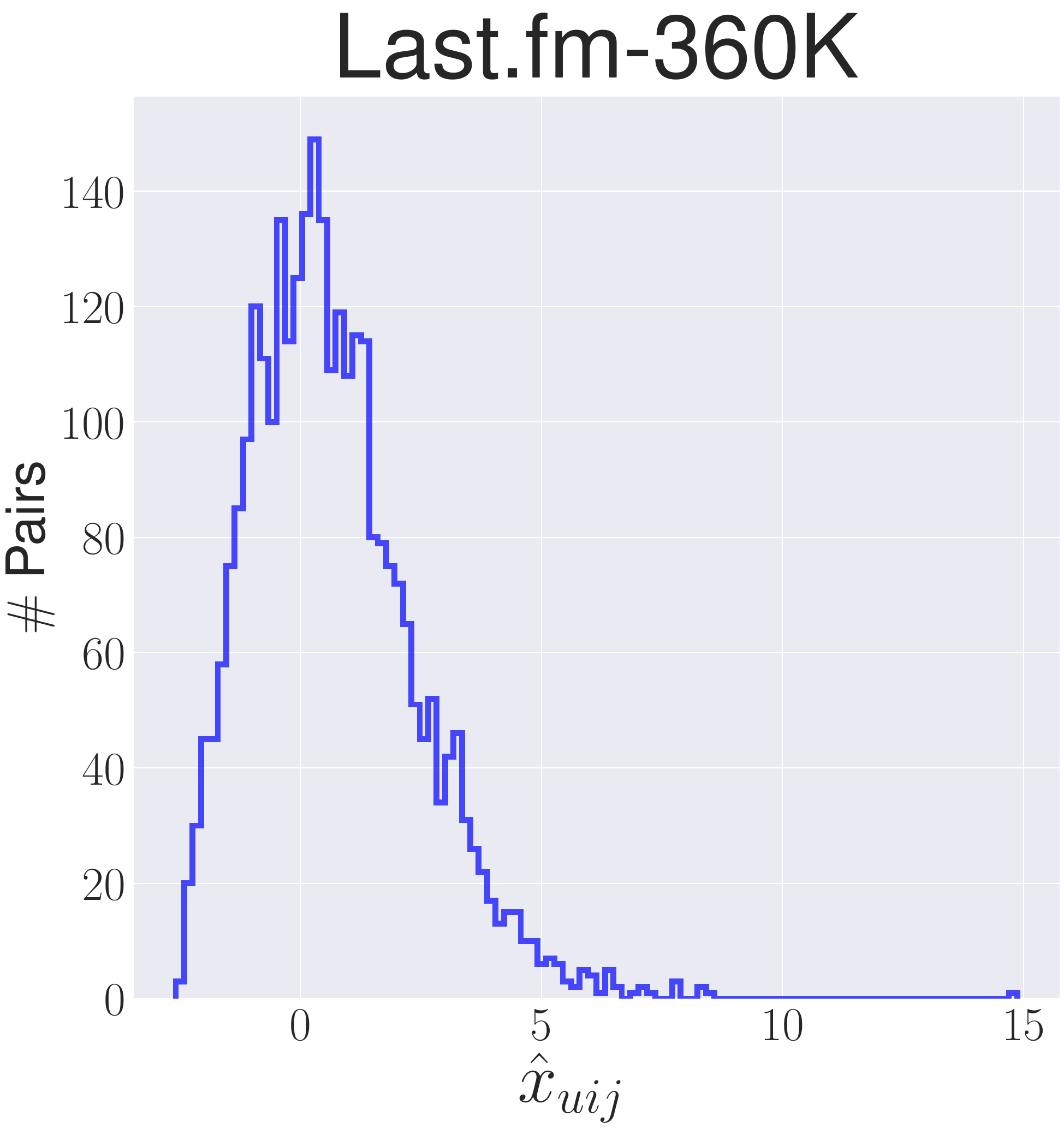}
        \caption{$\hat{\gamma}=0.373 $}
        \label{fig:lbpr}
    \end{subfigure}
    ~
    \begin{subfigure}[b]{0.145\textwidth}
        \includegraphics[width=\textwidth]{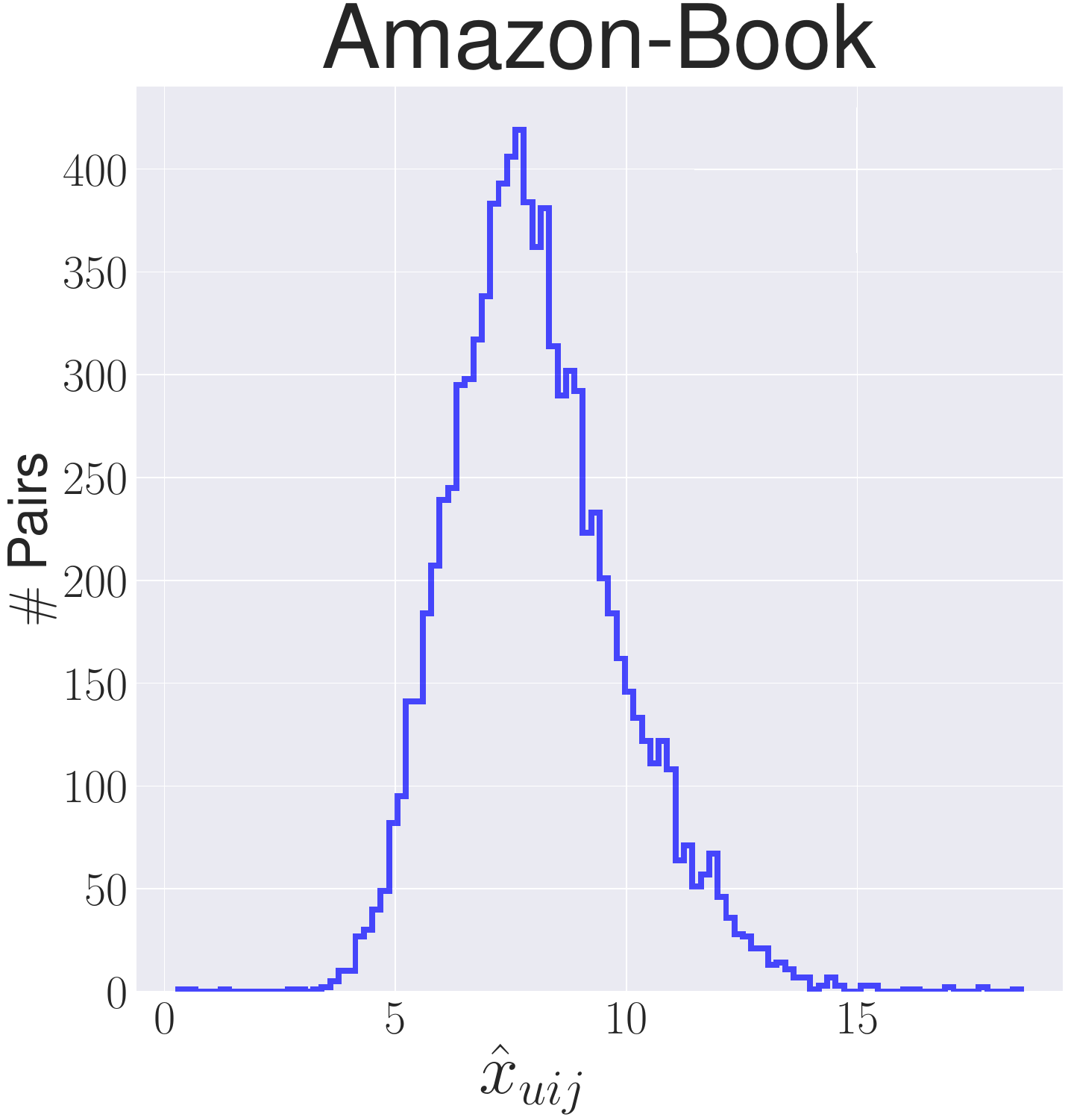}
        \caption{$\hat{\gamma} = 0.08$}
        \label{fig:sbpr}
    \end{subfigure}
    ~
    \caption{Distributions of $\hat{x}_{uij}$ learned from BPR.}\label{fig:bprorg}
\end{figure}

Inspired by the above observations (e.g., unimodal and skewed distributions), in this paper, we propose a simple yet novel optimization criterion
that leverages features of the skew normal distribution to better model
the problem.
First, the location parameter $\xi$ in the skewness normal distribution
provides an additional degree of freedom to allow us push the distribution of the
estimator to the right; also, the scale parameter $\omega$ is used to 
reduce model over-fitting for large $\xi$.
In addition, from Figure~\ref{fig:skew_n}, with a fixed $\xi$ and $\omega$,
enlarging the shape parameter $\alpha$ increases the probability $p(x>0)$.
Here, for personalized ranking, the random variable $X$ can be used
to describe the estimator $\hat{x}_{uij}$; thus, in this case, a larger
$\alpha$ entails a larger probability $p(\hat{x}_{uij}>0)$, which should 
benefit recommendation performance. 
Details for the proposed optimization criterion and its link to the AUC are provided in
Sections~\ref{sec:criterion} and~\ref{sec:AUC}, respectively.

\subsection{Criterion and Optimization}\label{sec:criterion}
\label{SPR}
Motivated by the above observations as well as the properties of the skew normal
distribution, in this paper we propose an unconventional optimization criterion
termed skewness ranking optimization (Skew-OPT) for personalized
recommendation. 
To this end, we recast the likelihood function referring to the individual
probability that a user really prefers item $i$ to item $j$ in
Eq.~(\ref{eq:bpr-opt}) as  
\begin{equation}
     p(i >_{u} j\,|\Theta,(\xi,\omega,\eta)) = \sigma\left(\left(\frac{{\hat{x}}_{uij}(\Theta)-\xi}{\omega}\right)^\eta\right),
\label{eq:newlikelihood}
\end{equation}
where $(\xi,\omega,\eta)$ denote three hyperparameters in the proposed
Skew-OPT, $\eta\in\mathbb{O}$, and $\sigma(\cdot)$ denotes the sigmoid
function.
Above, the inclusion of $\xi$ and $\omega$ is motivated by the location and scale parameters in
the skew normal distribution, respectively (see Section~\ref{sec:skewnorml}),
and $\mathbb{O}$ denotes the set of positive odd integers.
Note that forcing $\eta$ to be a positive odd integer ensures the rationality
of the likelihood function, as under this setting it is an increasing function
with argument ${\hat{x}}_{uij}$ (i.e., the distance between an observed item
and a non-observed one).
As mentioned previously, the location parameter $\xi$ here provides an
additional degree of freedom to allow us push the distribution of the estimator
to the right, and the scale parameter $\omega$ can be used to reduce
overfitting for large $\xi$.
It is also worth mentioning that the likelihood of BPR is a special case of
Eq.~(\ref{eq:newlikelihood})
with $\xi=0,\omega=1,\eta=1$.

With the above likelihood function in Eq.~(\ref{eq:newlikelihood}), the
optimization criterion becomes maximizing
\begin{align}
    \notag &\text{Skew-OPT}\\
    \notag :=& \ln\prod_{(u,i,j)\in D_S} p\left(i>_uj|\Theta,(\xi,\omega,\eta)\right)\,p(\Theta)\\
    \notag =& \sum_{(u,i,j)\in D_S} \ln p\left(i>_uj|\Theta,(\xi,\omega,\eta)\right) + \ln p(\Theta)\\
    \label{eq:Skew-OPT}
    =&\sum_{(u,i,j)\in D_S} \ln \sigma\left(\left(\frac{{\hat{x}}_{uij}(\Theta)-\xi}{\omega}\right)^\eta\right)
       - \lambda_{\Theta}\|\Theta\|^2.
\end{align}

Now, we discuss the relationship between Skew-OPT optimization and the
shape parameter $\alpha$ and the corresponding skewness value.

\begin{figure}
\begin{center}
\includegraphics[width=0.46\textwidth]{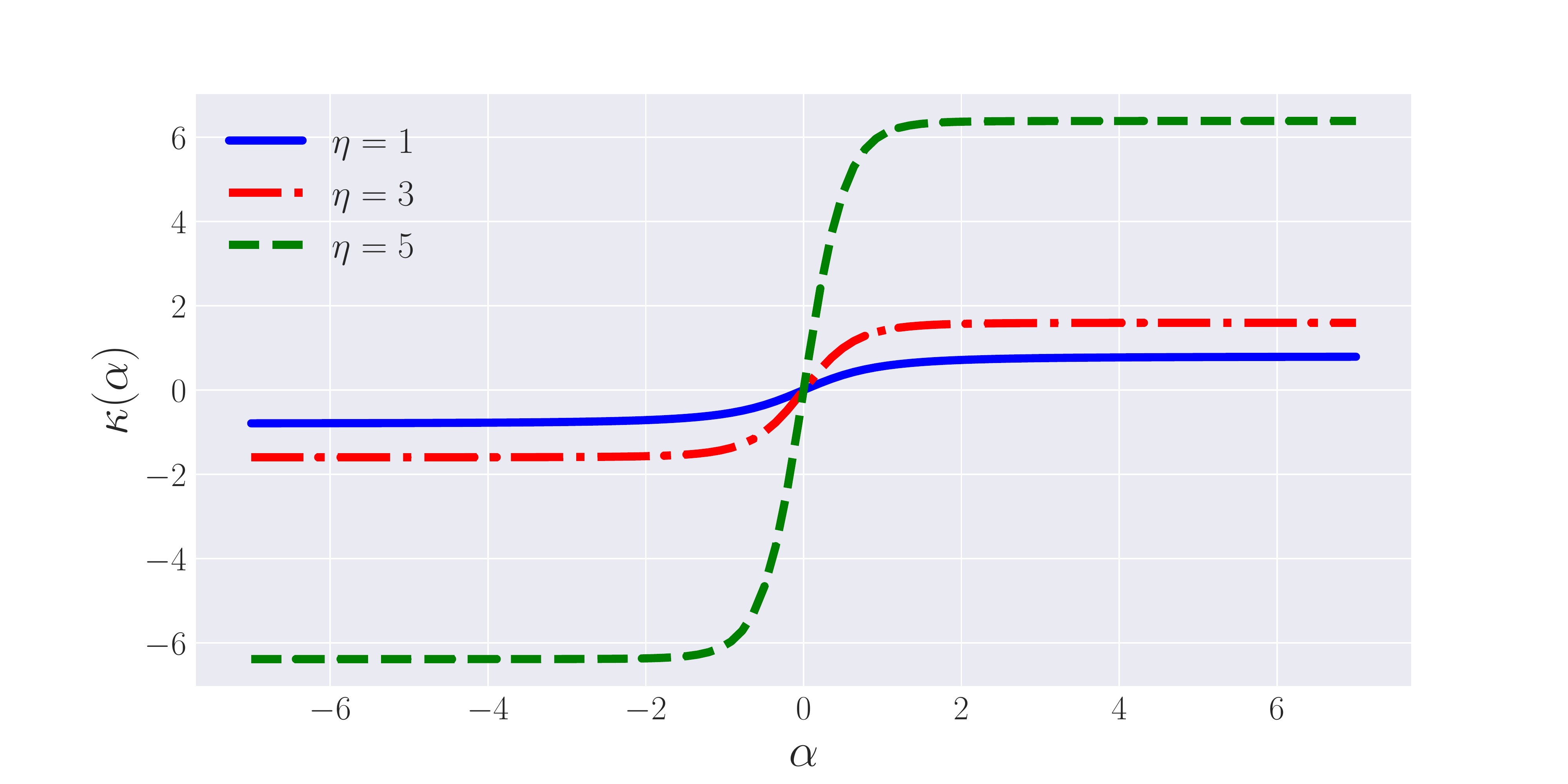}
\vspace{-0.2cm}
\caption{Increasing function $\kappa(\alpha)$ ($\xi=0,\omega=1$).}
\label{fig:kappa}
\end{center}
\end{figure}

\begin{lemm}
\label{lm1}
Given the case that $\hat{x}_{uij}$ follows a skew normal distribution with
fixed location parameter $\xi$ and scale parameter $\omega$, maximizing the first term of
Eq.~(\ref{eq:Skew-OPT}) for a certain~$\eta$ simultaneously maximizes the shape parameter~$\alpha$
and the skewness value of the estimator, $\hat{x}_{uij}(\Theta)$.
\end{lemm}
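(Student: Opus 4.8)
The plan is to read the first term of Eq.~(\ref{eq:Skew-OPT}) as a (scaled) expectation under the assumed skew normal law and to show that this expectation is strictly increasing in the shape parameter $\alpha$; monotonicity of the skewness $\gamma(\alpha)$ in $\alpha$ then closes the argument. Concretely, under the hypothesis that $\hat{x}_{uij}$ follows a skew normal law with fixed $\xi,\omega$ and shape $\alpha$, I would divide the sum by $|D_S|$ and regard it as a sample mean, so that in the population limit it equals
$$\kappa(\alpha) := \mathbb{E}\left[\ln\sigma\left(\left(\frac{X-\xi}{\omega}\right)^\eta\right)\right],\quad X\sim\text{skew normal}.$$
After the substitution $z=(x-\xi)/\omega$, the law of $Z$ is the standard skew normal $2\varphi(z)\Psi(\alpha z)$, so $\kappa(\alpha)=\int_{-\infty}^{\infty}\ln\sigma(z^\eta)\,2\varphi(z)\Psi(\alpha z)\,dz$. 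Establishing Lemma~\ref{lm1} thus reduces to proving $\kappa'(\alpha)>0$.

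For the derivative I would differentiate under the integral sign using $\partial_\alpha\Psi(\alpha z)=z\varphi(\alpha z)$, obtaining
$$\kappa'(\alpha)=\int_{-\infty}^{\infty}\ln\sigma(z^\eta)\,2z\,\varphi(z)\varphi(\alpha z)\,dz.$$
The weight $2z\,\varphi(z)\varphi(\alpha z)$ is odd in $z$, so folding the negative half-axis onto the positive one replaces $\ln\sigma(z^\eta)$ by $\ln\sigma(z^\eta)-\ln\sigma((-z)^\eta)$. Here the constraint $\eta\in\mathbb{O}$ is essential: with $\eta$ odd, $(-z)^\eta=-z^\eta$, and the sigmoid identity $\ln\sigma(t)-\ln\sigma(-t)=t$ collapses the difference to exactly $z^\eta$. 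Hence
$$\kappa'(\alpha)=\int_{0}^{\infty}z^\eta\cdot 2z\,\varphi(z)\varphi(\alpha z)\,dz=\int_{0}^{\infty}2z^{\eta+1}\varphi(z)\varphi(\alpha z)\,dz>0,$$
since every factor of the integrand is positive on $(0,\infty)$. This shows $\kappa$ is strictly increasing, so maximizing the first term drives $\alpha$ upward.

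To connect $\alpha$ with skewness I would verify that $\gamma(\alpha)$ in Eq.~(\ref{eq:skewnormal_skew}) is itself increasing. Substituting $\delta=\alpha/\sqrt{1+\alpha^2}\in(-1,1)$ (which is increasing in $\alpha$) and then $b=\delta\sqrt{2/\pi}$, the skewness simplifies to $\gamma=\tfrac{4-\pi}{2}\,b^3/(1-b^2)^{3/2}$, whose derivative in $b$ equals $\tfrac{4-\pi}{2}\cdot 3b^2/(1-b^2)^{5/2}\ge 0$; since $b$ increases with $\alpha$, so does $\gamma$. Combining this with $\kappa'(\alpha)>0$ yields the claim that maximizing the first term simultaneously maximizes $\alpha$ and the skewness value. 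The main obstacle I anticipate is not the calculus but the framing: one must justify replacing the finite sum by the expectation $\kappa(\alpha)$ (equivalently, interpreting the optimization as acting on the single free parameter $\alpha$ of the induced distribution), and be explicit that it is precisely the parity constraint $\eta\in\mathbb{O}$ together with the sigmoid identity that renders the otherwise sign-indefinite integrand positive.
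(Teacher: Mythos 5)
Your proof is correct, and it takes a genuinely different---and in one respect tighter---route than the paper's. The paper first linearizes the objective: it rewrites the first term as $\sum -\ln\left(1+e^{-((\hat{x}_{uij}-\xi)/\omega)^\eta}\right)$, \emph{omits the} $1$, and declares maximization equivalent to maximizing the moment $\mathbb{E}\left[\left((X-\xi)/\omega\right)^\eta\right]$; only then does it differentiate in $\alpha$, getting an everywhere-nonnegative integrand because $\eta+1$ is even. You instead keep the exact expected log-sigmoid $\kappa(\alpha)=\mathbb{E}\left[\ln\sigma(Z^\eta)\right]$ with $Z$ standard skew normal, and prove monotonicity in $\alpha$ directly via the odd-symmetry fold of the weight $2z\varphi(z)\varphi(\alpha z)$ together with the identity $\ln\sigma(t)-\ln\sigma(-t)=t$. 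This removes the paper's one loose step: replacing $\ln\sigma(t)$ by $t$ changes the objective, and the claimed ``equivalence'' of the two maximizations is heuristic, whereas your computation shows the \emph{exact} population objective is increasing in $\alpha$. Amusingly, your folded derivative $\int_0^\infty 2z^{\eta+1}\varphi(z)\varphi(\alpha z)\,dz$ is exactly half of the paper's Eq.~(\ref{eq:kappa}) after the substitution $z=(x-\xi)/\omega$: the parity constraint $\eta\in\mathbb{O}$ does the work in both arguments, but through different mechanisms (oddness of $z\mapsto z^\eta$ feeding the sigmoid identity for you; evenness of $z^{\eta+1}$ for the paper). On the skewness side you are also more complete: the paper asserts $\partial\gamma(\alpha)/\partial\alpha>0$ ``similarly'' and points to a figure, whereas your substitution $b=\sqrt{2/\pi}\,\alpha/\sqrt{1+\alpha^2}$ with derivative $\tfrac{4-\pi}{2}\cdot 3b^2/(1-b^2)^{5/2}$ settles it (the derivative vanishes only at $\alpha=0$, so $\gamma$ is still strictly increasing on $\mathbb{R}$). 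Both arguments share the same informal framing step---passing from the finite sum over $D_S$ to an expectation under a skew normal law whose only free parameter is $\alpha$---which the paper makes silently and you at least flag explicitly; neither justifies differentiation under the integral sign, though Gaussian domination of the polynomially growing integrand makes that routine.
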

\begin{proof}
In Eq.~(\ref{eq:Skew-OPT}), the first term can be written as 
\begin{equation*}
     \sum_{(u,i,j)\in D_S} -\ln \left(1+e^{-\left(\frac{{\hat{x}}_{uij}(\Theta)-\xi}{\omega}\right)^\eta}\right).
     \label{eq:first-term}
\end{equation*}
Omitting the $1$ in the above equation makes it clear that
maximizing the above summation is equivalent to maximizing
\begin{align}
      \notag &\sum_{(u,i,j)\in D_S} \left(\frac{{\hat{x}}_{uij}(\Theta)-\xi}{\omega}\right)^\eta\\
      &\propto\mathbb{E}_{(u,i,j)\sim D_S} \left[\left(\frac{{\hat{x}}_{uij}(\Theta)-\xi}{\omega}\right)^\eta\right].
      \label{eq:alpha_function}
\end{align}
With fixed $\xi$, $\omega$, and $\eta$, when $\hat{x}_{uij}$ follows a skew
normal distribution, Eq.~(\ref{eq:alpha_function}) can be represented as a
function of the shape parameter $\alpha$ as
\begin{align}
    \label{eq:alpha_expectation}
    \kappa(\alpha)&=\mathbb{E}
    \left[\left(\frac{{\hat{x}}_{uij}(\Theta)-\xi}{\omega}\right)^\eta\right],
\end{align}

Now, we prove that both $\kappa(\alpha)$ in Eq.~(\ref{eq:alpha_expectation})
and $\gamma(\alpha)$ in Eq.~(\ref{eq:skewnormal_skew}) are increasing functions
by showing that $\partial\kappa(\alpha)/\partial\alpha>0$ and
$\partial\gamma(\alpha)/\partial\alpha>0$.
For the former, we have
\begin{align}
&\notag\partial\kappa(\alpha)/\partial\alpha\\
\notag =&\partial\left(\int_{-\infty}^\infty\left(\frac{x-\xi}{\omega}\right)^\eta f(x) dx \right)/\partial\alpha\\
\notag =&\int_{-\infty}^\infty\left(\frac{x-\xi}{\omega}\right)^\eta \partial f(x)/ \partial\alpha\, dx\\
\label{eq:kappa}
=&\int_{-\infty}^\infty\left(\frac{x-\xi}{\omega}\right)^{\eta+1}\left(\frac{2}{\omega}\right)\phi\left(\frac{x-\xi}{\omega}\right) \left(\frac{e^{-\frac{\alpha^2\left(x-\xi\right)^2}{2\omega^2}}}{\sqrt{2\pi}}\right) dx.
\end{align}
where the density function $f(x)$ is defined in Eq.~(\ref{eq:skewnormaldensity}).

Above, the first component in Eq.~(\ref{eq:kappa}) is greater than or equal to
zero as $\eta+1$ is an even integer; the remaining three components are all positive
as $\omega>0$, $\phi(\cdot)$ is a PDF, and the numerator of the last component
is an exponential function.
Moreover, since Eq.~(\ref{eq:kappa}) involves integration over all $x$, it is
clear that we have $\partial\kappa(\alpha)/\partial\alpha>0$, and thus
$\kappa(\alpha)$ is an increasing function (see Figure~\ref{fig:kappa} for
example).
Similarly, it is easy to prove that
$\partial\gamma(\alpha)/\partial\alpha>0$, an illustration for which is shown in
Figure~\ref{fig:gamma}.
As a result, a larger expected value in Eq.~(\ref{eq:alpha_expectation})
corresponds to a larger $\alpha$; also, the value of skewness $\gamma$
increases as $\alpha$ increases, suggesting that maximizing the first term in
Eq.~(\ref{eq:Skew-OPT}) happens to simultaneously maximize the shape parameter
$\alpha$ along with the skewness of the estimator, $\hat{x}_{uij}(\Theta)$.
\end{proof}

\begin{figure}
\begin{center}
\includegraphics[width=0.46\textwidth]{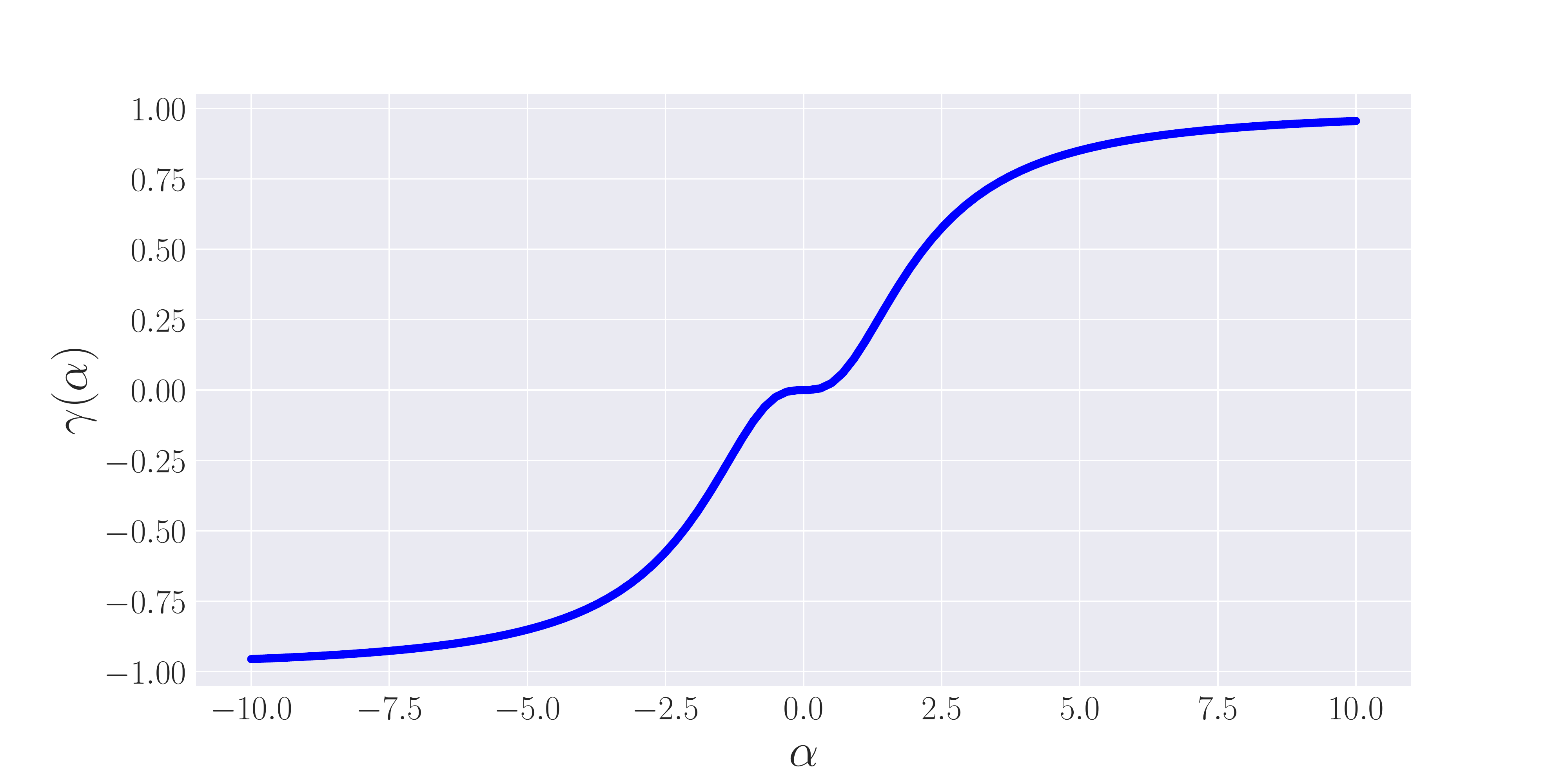}
\vspace{-0.2cm}
\caption{Increasing function $\gamma(\alpha)$.}
\label{fig:gamma}
\end{center}
\end{figure}

In the optimization stage, the objective function is maximized by utilizing the
asynchronous stochastic gradient ascent---the opposite of
asynchronous stochastic gradient descent (ASGD)~\cite{asgd}---for updating the
parameters $\Theta$ in parallel. For each triple $(u,i,j) \in D_{S}$, an update
with learning rate $\beta$ is performed as follows (see
Algorithm~\ref{alg:Skew-OPT}): 
\begin{equation*}
   \Theta \xleftarrow{} \Theta + \beta 
   \bigg( 
   \frac{\partial \text{Skew-OPT}}{\partial \Theta}
   \bigg),
\end{equation*}
where the gradient of Skew-OPT with respect to the model parameters is
\begin{align*}
\notag &\frac{\partial \text{Skew-OPT}}{\partial \Theta}\\
\notag =&\sum_{(u,i,j)\in D_S} \frac{\partial}{\partial \Theta} \ln \sigma\left(\left(\frac{{\hat{x}}_{uij}(\Theta)-\xi}{\omega}\right)^\eta\right)
       - \lambda_{\Theta}\|\Theta\|^2\\
       \propto&\sum_{(u,i,j)\in D_S} \frac{e^{-\left(\frac{{\hat{x}}_{uij}(\Theta)-\xi}{\omega}\right)^\eta}}{1+e^{-\left(\frac{{\hat{x}}_{uij}(\Theta)-\xi}{\omega}\right)^\eta}}\frac{\partial}{\partial \Theta}\left(\frac{\hat{x}_{uij}-\xi}{\omega}\right)^\eta-\lambda_\Theta\Theta.
\end{align*}

\begin{algorithm}[h!]
\SetAlgoLined
Input $D_S$\;
\Begin{
    Initialize $\Theta$\;
    \Repeat{\rm convergence}{
    Sample a triple $(u,i,j)$ from $D_S$\;
    $\Theta \xleftarrow{} \Theta + \beta \bigg( 
    \frac{\partial \text{Skew-OPT}\big(\hat{x}_{uij}(\Theta)\big)}{\partial\Theta} 
    \bigg)$;
    }
    \Return ${\Theta}$\;
}
 \caption{Model learning with Skew-OPT}
 \label{alg:Skew-OPT}
\end{algorithm}


\subsection{Analogies to AUC optimization}\label{sec:AUC}
With our optimization formulation in
Section~\ref{sec:criterion}, we here analyze the relationship between
Skew-OPT and AUC. 
The AUC per user is commonly defined as
\begin{equation*}
    {{\rm AUC}(u)} := \frac{1}{\left| I_u^+ \right|\left| I\setminus I_u^+ \right| }
    \sum_{i \in I_u^+}\sum_{j\in I\setminus I_u^+} \delta (\hat{x}_{uij} > 0).
\end{equation*}
Above, $\delta (x_{uij})$ is the indicator function defined as
\begin{equation*}
    \delta (\hat{x}_{uij}) = 
     \begin{cases}
        \; 1, \; \text{if} \;\;\hat{x}_{uij} > 0 \\
        \; 0, \; \text{otherwise}.\\ 
     \end{cases}
\end{equation*}
The average AUC of all users is
\begin{align}
    \notag {\rm AUC} :=& \frac{1}{\left| U \right|} \sum_{u \in U} {{\rm AUC}(u)}\\
    =& 
    \sum_{(u,i,j)\in D_S} w_u \delta (\hat{x}_{uij} > 0), \label{eq:AUC}
\end{align}
where
\begin{equation*}
w_u=\frac{1}{\left|U\right|\left|I_u^+ \right|\left|I\setminus I_u^+ \right|}.
\end{equation*}

The analogy between Eq.~(\ref{eq:AUC}) and the objective function of BPR is
clear as their main difference is the normalizing constant. 
Note that BPR is a special case with $\xi=0,\omega=1,\eta=1$ in the proposed
Skew-OPT.
With Skew-OPT, the analogy becomes a bit involved and is explained as follows.
In the proposed Skew-OPT with fixed hyperparameters $\xi,\omega,\eta$,
Lemma~\ref{lm1} states that maximizing the first term of
Eq.~(\ref{eq:Skew-OPT}) simultaneously maximizes the shape parameter~$\alpha$
under the assumption of the skew normal distribution for the estimator.
Moreover, as mentioned in Sections~\ref{sec:skewnorml}
and~\ref{sec:observation}, it is clear that increasing $\alpha$ enlarges the
probability $p(\hat{x}_{uij}>0)$, which is equal to the area under the PDF
curve for $\hat{x}_{uij}>0$.
This characteristic hence clearly shows the analogy between
Eq.~(\ref{eq:AUC}) and Skew-OPT. 

Whereas the AUC above refers to the macro average of the AUC
values for all users, we here consider the micro average version defined as
\begin{equation}
   {\rm AUC}^{\rm micro}:=\frac{1}{|D_S|}\sum_{(u,i,j)\in D_S} \delta(\hat{x}_{uij}>0).
   \label{eq:AUC-mirco}
\end{equation}
Under the assumption that $\hat{x}_{uij}$ follows the skew normal distribution
with fixed location parameter $\xi$ and scale parameter $\omega$, Eq.~(\ref{eq:AUC-mirco}) can be
rewritten as 
\begin{align*}
    {\rm AUC}^{\rm micro}:= & \mathbb{E}\,[\delta(\hat{x}_{uij}>0)\,]=p(\hat{x}_{uij}>0)\\
    = & 1-F(0)\\
    = & 1-\Psi\left(\frac{0-\xi}{\omega}\right)+2T\left(\left(\frac{0-\xi}{\omega}\right),\alpha\right),
\end{align*}
where $F(x)$ is the CDF of the skew normal distribution defined in Eq.~(\ref{eq:CDF}).
Additionally, when $\alpha\rightarrow\infty$, AUC$^{\rm micro}$ achieves its
maximum value, one, with $\xi\geq 0$, because
\begin{align}
    \forall \xi\geq 0,\,\, & \notag \lim_{\alpha\rightarrow\infty}2T\left(\left(\frac{0-\xi}{\omega}\right),\alpha\right)\\
    \label{eq:xigeq0}
    & = \frac{1}{2} \left(1+{\rm erf}\left(\frac{0-\xi}{\omega}\right)/\sqrt{2}\right)\\
    \notag & =  \Psi\left(\frac{0-\xi}{\omega}\right).
\end{align}
Note that for $\xi<0$, the limit value in Eq.~(\ref{eq:xigeq0}) becomes
$\frac{1}{2} \left(1-{\rm
erf}\left(\frac{0-\xi}{\omega}\right)/\sqrt{2}\right)$, but in this paper we do
not consider this case as we seek to maximize the estimator by shifting the 
distribution to the right on the horizontal axis.

\section{Experiments}
\subsection{Dataset}
To examine the performance of the proposed method, we conducted experiments on
five real-world datasets with different sizes, densities, and domains, the statistics ow which are shown
in Table~\ref{tab:addlabel}.
For each of the datasets, we converted the user-item interactions into implicit
feedback. For the 5-star rating datasets, we treated ratings higher than or
equal to 3.5 as positive feedback and the rest as negative feedback; as for the
count-based datasets, we took counts higher than 3 as positive
feedback and the remaining ones as negative feedback; for the CiteUlike dataset, since it is
already composed of binary user preferences, no transformation was needed.

\begin{table}
\centering
\setlength{\tabcolsep}{1mm}{
\small\small
\begin{tabular}{l rr rr c}
\toprule
& Users & Items & Edges & Edge type \\
\midrule 
CiteULike & 5,551 & 16,980 & 210,504 & like/dislike  \\
Amazon-Book & 70,679 & 24,916 & 846,522 & 5-star \\
Last.fm-360K & 23,566 & 48,123 &  303,4763 & play count \\
MovieLens-Latest & 259,137 & 40,110 &24,404,096 & 5-star \\
Epinions-Extend & 701,498 & 110,235 &12,581,748 & 5-star \\
\bottomrule
\end{tabular}}
\vspace{-0.2cm}
\caption{Dataset statistics}
\label{tab:addlabel}%
\end{table}%

\begin{table*}[!ht]
\centering
\makebox[\textwidth][c]{
\resizebox{1\textwidth}{!}{
\begin{tabular}{l rrr rrrr rrr rrr rrr}
\toprule
& \multicolumn{2}{c}{CiteUlike} & \multicolumn{2}{c}{Amazon-Book} & \multicolumn{2}{c}{Last.fm-360K} &
\multicolumn{2}{c}{MovieLens-Latest} & 
\multicolumn{2}{c}{Epinions-Extend} & \\
\cmidrule(lr){2-3}\cmidrule(lr){4-5}\cmidrule(lr){6-7}\cmidrule(lr){8-9}\cmidrule(lr){10-11} 
& Recall@10 & mAP@10 & Recall@10 & mAP@10 & Recall@10 & mAP@10 & Recall@10 & mAP@10 & Recall@10 & mAP@10 \\
\midrule
WRMF \cite{wrmf2, wrmf1} & 0.2159 & 0.1236 & 0.0950 & 0.0374 & 0.1308 & 0.0576 & 0.2122 & 0.1061 & 0.1025 & 0.0415 \\
BPR \cite{bpr} & 0.2217 & 0.1332 & 0.0972 & 0.0390 & 0.1394 & 0.0690 & 0.1952 & 0.1097 & 0.1137 & 0.0584 \\
WARP \cite{warp} & 0.1859 & 0.1033 & 0.0869 & 0.0356 & $\dagger$ 0.1763 & $\dagger$ 0.0937 & $\dagger$ 0.2748 & $\dagger$ 0.1634 &  0.1479 & 0.0711 \\
Hop-Rec \cite{hop-rec} & 0.2232 & 0.1319 & $\dagger$ 0.1072 & $\dagger$ 0.0426 & 0.1701 & 0.0870 & 0.2557 & 0.1419 & $\dagger$ 0.1617 & $\dagger$ 0.0813 \\
NGCF \cite{ngcf} & $\dagger$ 0.2321 & $\dagger$ 0.1367 & 0.0818 & 0.0335 & - & - & - & - & - & - \\
\midrule
Skew-OPT ($\eta=1$) & *{0.2413} & *{0.1541} & 0.1069 & *0.0467 & *0.1976 & *0.1051 & 0.2809 & 0.1636 & *0.1743 & *0.0914 \\
Improv. ($\%$) & +3.96\% & +12.72\%  & -0.27\%  & +9.62\% & +12.08\% & +12.17\%  & +2.21\% & +0.12\% & +7.79\% & +12.42\% \\
\midrule
Skew-OPT ($\eta=3$) & *{0.2481} & *{0.1591} & *\textbf{0.1173} & *0.0504 & *\textbf{0.2032} & *\textbf{0.1103} & *0.2852 & *0.1686 & *\textbf{0.1768} & *\textbf{0.0941} \\
Improv. ($\%$) & +6.89\% & +16.38\%  & +9.42\%  & +18.07\% & +15.25\% & +17.71\%  & +3.78\% & +3.18\% & +9.33\% & +15.74\% \\
\midrule
Skew-OPT ($\eta=5$) & *\textbf{0.2553} & *\textbf{0.1626} & *0.1163 & *\textbf{0.0522} & *0.2012 & *0.1083 & *\textbf{0.2879} & *\textbf{0.1699} & *0.1758 & *0.0915 \\
Improv. ($\%$) & +9.91\% & +18.94\%  & +8.48\%  & +22.53\% & +14.12\% & +15.58\%  & +4.76\% & +3.97\% & +8.71\% & +12.54\% \\
\bottomrule
\end{tabular}%
}}
\vspace{-0.2cm}
\caption{Recommendation performance. The † symbol indicates the best performing
score among all the compared models; ‘*’ and ‘Improv. (\%)’ denote statistical
significance at $p\text{-value} < 0.01$ with a paired t-test and the percentage
improvement of the proposed model, respectively, with respect to the best
performing value in the baselines.}
\label{tab:rec}%
\end{table*}%

\subsection{Baseline Algorithms}
In the following experiments, we compared our proposed model with the following
five representative and widely used recommendation algorithms.
\begin{itemize}[noitemsep]
    \item
	 {\bf WRMF}~\cite{wrmf2, wrmf1} (weighted regularized matrix factorization)
	 a relational weighted version of matrix factorization optimized by
	 utilizing least-square learning with an addition regularization term.
    \item
	 {\bf BPR}~\cite{bpr} (Bayesian personalized ranking) adopts pairwise
	 ranking loss for personalized recommendation and exploits direct user-item
	 interactions to separate negative items from positive items.
    \item 
	 {\bf WAPR}~\cite{warp} (weighted approximate-rank pairwise) an improved
	 ranking-based embedding model based on BPR, which weighs pairwise
	 violations depending on their position in the ranked list.
    \item
	 {\bf Hop-Rec}~\cite{hop-rec} (high-order proximity recommendation) 
	 a state-of-the-art hybrid model that integrates the concepts of 
	 graph-based and factorization-based models, where high-order neighbors
	 in a user-item interaction graph are exploited to enrich the information.
    \item
	 {\bf NGCF}~\cite{ngcf} (neural graph collaborative filtering) the
	 state-of-the-art neural-based CF model that recursively propagates the
	 embeddings on the user-item interaction graph, where high-order
	 connectivity is also encoded into user and item embeddings.
\end{itemize}

\subsection{Evaluation and Settings}
In the experiments, we focus on top-$N$ item recommendation.
To evaluate the model capability for this task, we utilized the following two
commonly used performance evaluation metrics: 1) recall and 2) mean average
precision (mAP).
For all datasets, we randomly divided the interaction data into 80$\%$ and
20$\%$ as the training set and the testing set, respectively. Also, the reported results
are the averaged results over five repetitions in this manner.
In addition, the dimensions of embedding vectors were all fixed to 128, and all
the hyperparameters of compared models were determined via a grid search over
different settings, from which the combination that leads to the best performance was
chosen. The ranges of hyperparameters we searched for the compared methods are
listed as follows.



\subsection{Experimental Results}

In the following sections, we demonstrate the recommendation performance and
several characteristics of the proposed Skew-OPT.
First, we conduct the experiments on the task of top-$N$ recommendation and
compare the proposed method with the five baselines. 
We then provide a sensitivity analysis for the three key hyperparameters in
our model.
Finally, we study the learned distributions of the estimator for the five
datasets and compare them with the skew normal distribution.

\begin{figure*}[htb]
\begin{center}
\includegraphics[width=\textwidth]{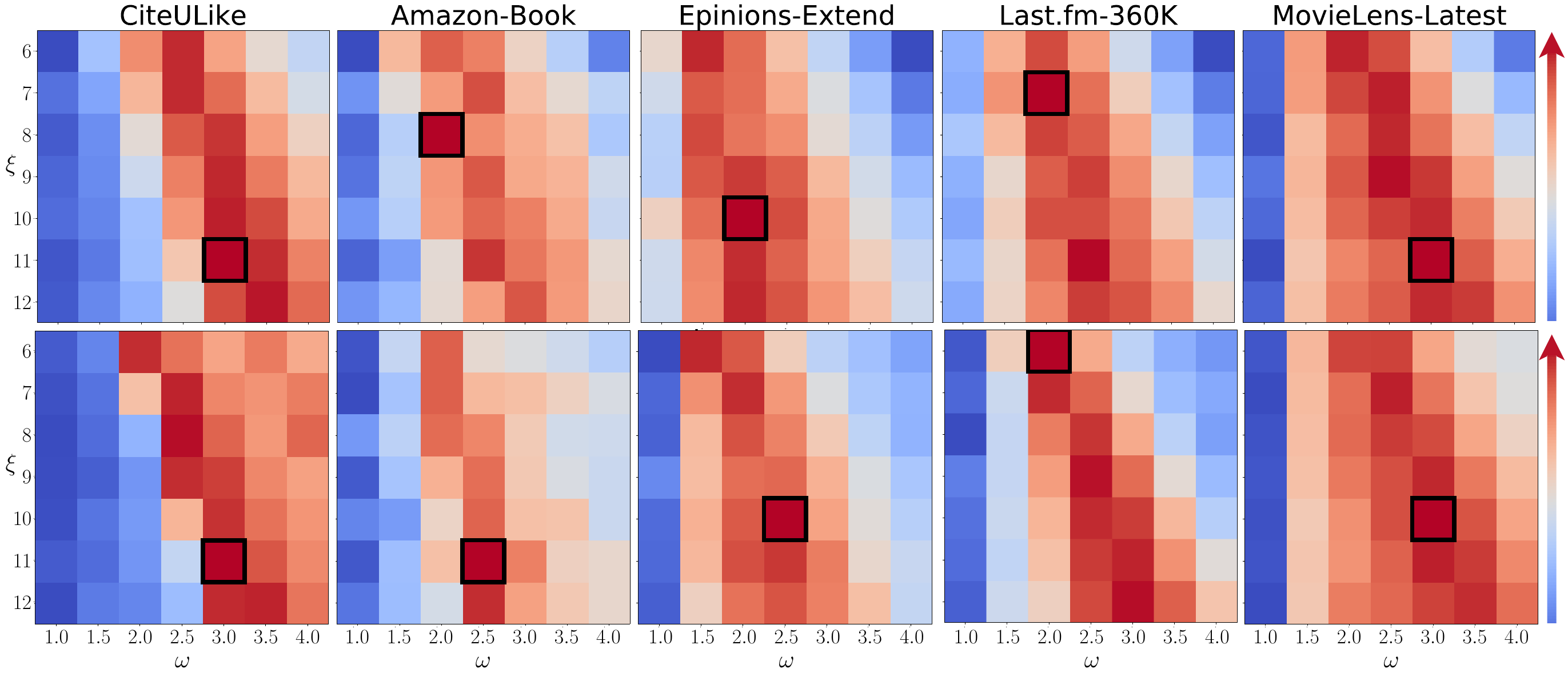}
\vspace{-0.7cm}
\caption{Sensitivity analysis. The first and the second rows represent the results for $\eta=3$ and $\eta=5$, respectively.}
\label{fig:figure1}
\end{center}
\end{figure*}

\begin{figure}
\begin{center}
\includegraphics[width=0.46\textwidth]{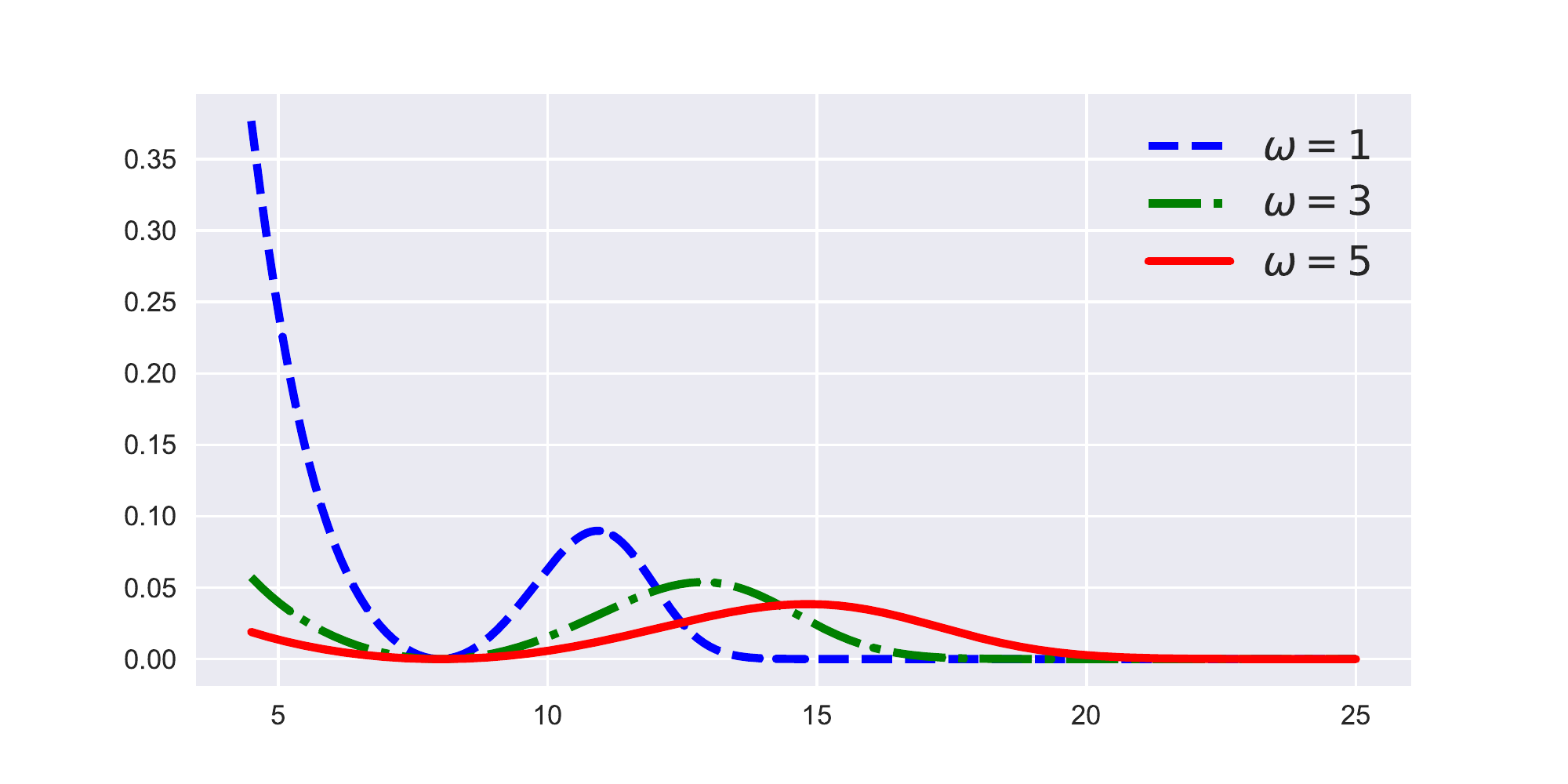}
\vspace{-0.2cm}
\caption{Gradient smoothing ($\xi=8$, $\eta=3$).}
\label{fig:gradient}
\end{center}
\end{figure}

\subsubsection{Top-$N$ Recommendation Performance}
Table~\ref{tab:rec} compares the top-$N$ recommendation performance of Skew-OPT and the five baseline methods, where we list the results with
$\eta=1,3,5$ for comparison; the best results are highlighted in bold. 
For NGCF, we report only the results on Amazon-book and CiteULike due to
computational resource limitations.\footnote{NGCF requires extensive
computational time for large-scale datasets, e.g., more than 24 hours to obtain
a converged result for MovieLens-Latest; note that the training of other models
including ours can however be completed within an hour.}
Note that the~$\dagger$ symbol in the table indicates the best performing
method among all the baseline methods, and the reported percentage improvement
(Improv.\ (\%)) denotes the improvement of the proposed Skew-OPT with respect
to the best-performing baseline. 
Observe from the table that WARP, HOP-rec, and NGCF serve as strong and competitive
baselines.
Even so, the proposed Skew-OPT surpasses all five baselines by a
significant amount for the experiments on all five datasets.  
The results demonstrate that the proposed model maintains
consistent superior performance among different datasets in terms of
both recall@10 and mAP@10, where the improvements range from
3.78\% to 15.25\% in Recall@10 and from 3.18\% to 18.07\% in mAP@10 when $\eta=3$, and from
4.76\% to 14.12\% in Recall@10 and from 3.97\% to 22.53\% in mAP@10 when
$\eta=5$.
Hence, according to the results reported in Table~\ref{tab:rec}, we believe such improvements are substantial, thereby significantly advancing the existing state of the art.
It is also worth mentioning that the proposed Skew-OPT achieves better results than Hop-Rec and
NGCF by solely using user-item interactions without exploring high-order
connections.
\subsubsection{Sensitivity Analysis}
\begin{figure}[h!]
\begin{center}
\begin{subfigure}[b]{0.45\textwidth}
         \centering
         \includegraphics[width=\textwidth]{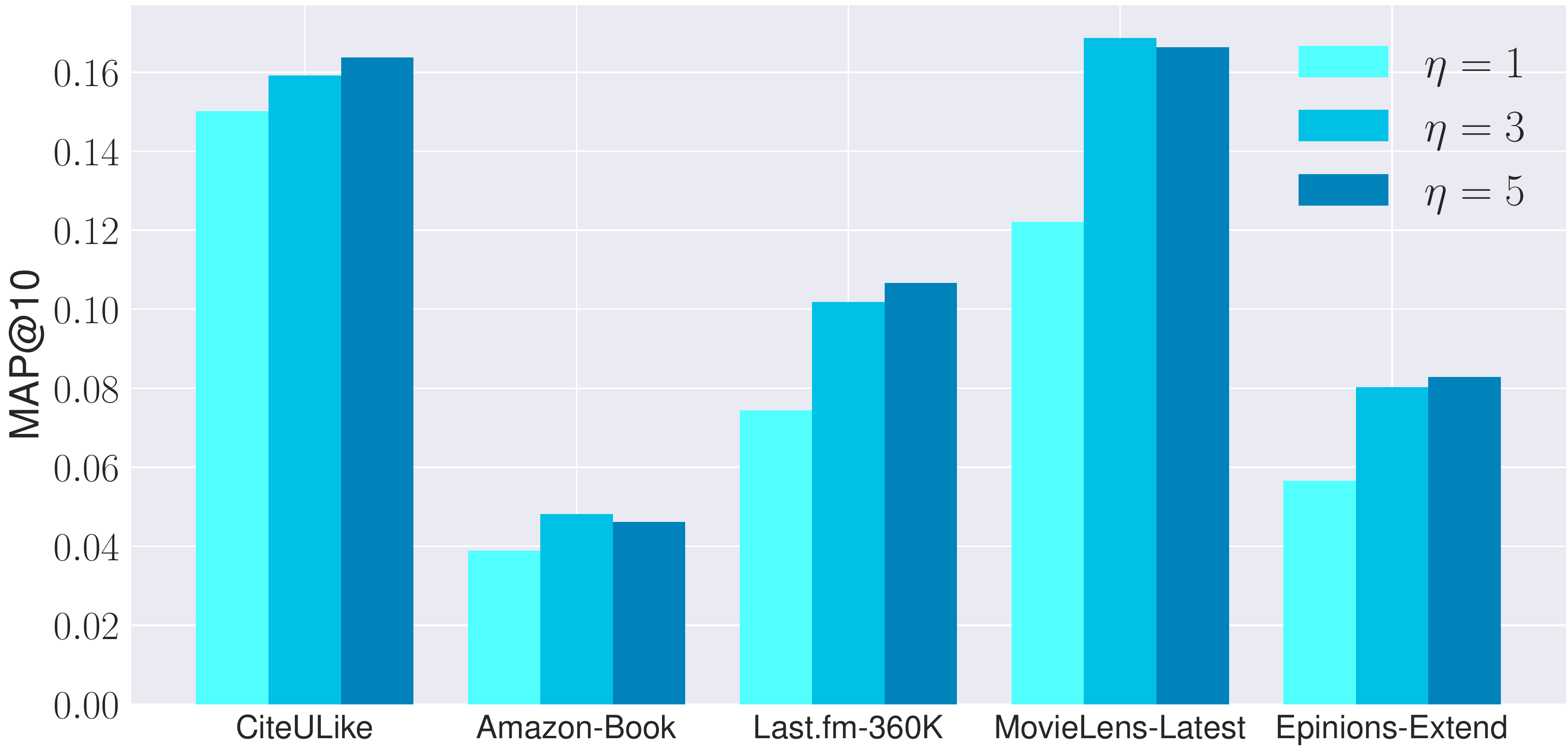}
         \caption{$\xi=11$, $\omega=3$}
\end{subfigure}
\begin{subfigure}[b]{0.45\textwidth}
         \centering
         \includegraphics[width=\textwidth]{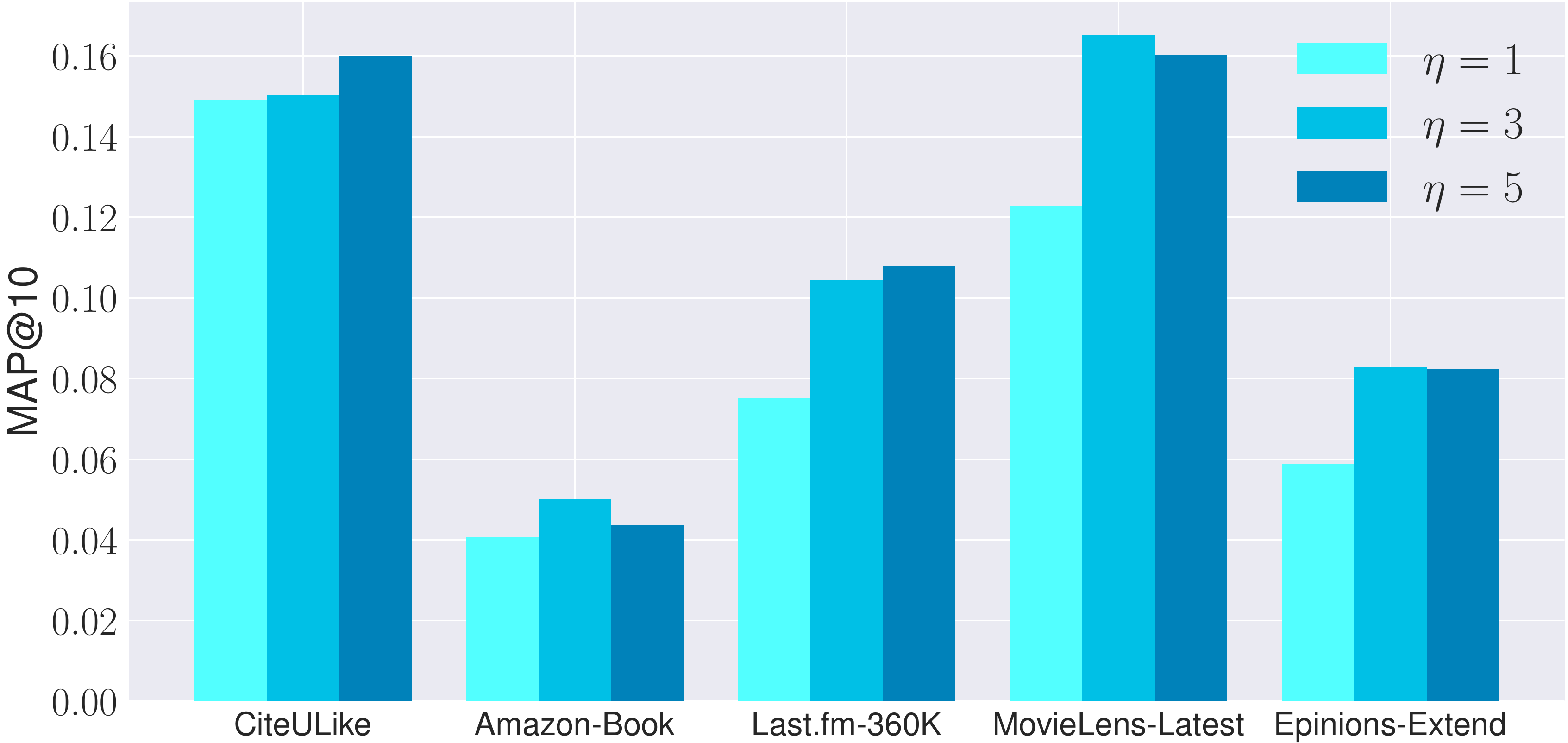}
         \caption{$\xi=12$, $\omega=3$}
\end{subfigure}
\vspace{-0.2cm}
\caption{Sensitivity analysis on $\eta$.}
\label{fig:bar_perform}
\end{center}
\end{figure}

\begin{figure*}
\begin{center}
\includegraphics[width=\textwidth]{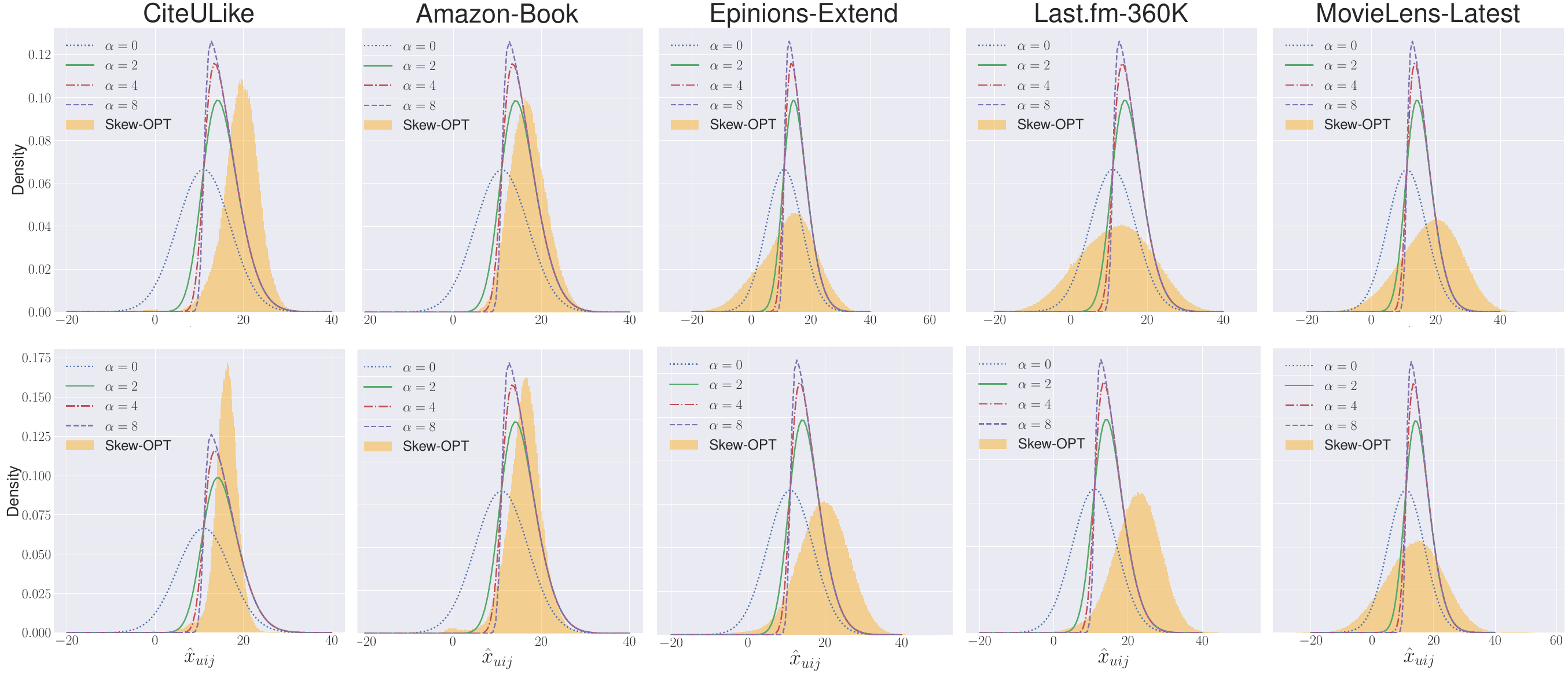}
\vspace{-0.6cm}
\caption{Learned distributions and the skew normal distributions ($\xi=11$, $\omega=3$). The first and the second rows represent the distributions when $\eta=3$ and $\eta=5$, respectively. The bar plots in red denotes the learned distributions with Skew-OPT, and the curves corresponds to the skew normal distributions with $\xi=11$ and $\omega=3$ but with various values of shape parameter $\alpha$.}
\label{fig:distribution}
\end{center}
\end{figure*}
Figure~\ref{fig:figure1} shows the heat maps for mAP@10 on the two key
hyperparameters $\xi$ and $\omega$ in the proposed Skew-OPT; note that we here
plot the results only for $\eta=3,5$ as these two values yield consistently
better performance than $\eta=1$ as shown in Table~\ref{tab:rec}. 
From the figure, we observe that increasing $\xi$, which stands for the
location parameter of the estimator, generally improves the performance while
considering a proper~$\omega$.
In addition, the results of all of the five datasets display a similar
tendency in this sensitivity check; that is, a large $\xi$ usually requires a
large $\omega$ and a small $\xi$ considers a small~$\omega$. 
In other words, if we consider the parameter setting in an opposite direction
from this characteristic, the performance of our model deteriorates.
This is due to the fact that increasing $\xi$ actually increases the possibility
of the model overfitting whereas a large $\omega$ yields gradient smoothing for the
optimization (see Figure~\ref{fig:gradient} which demonstrates the gradient
smoothing effect), thereby better balancing the overfitting that results from a
large $\xi$. 
Note that the square framed in black in each of the sub-figures of
Figure~\ref{fig:figure1} denotes the best performance for each dataset, the
value of which is listed in Table~\ref{tab:rec} (see the values in the columns
for mAP@10 in the table).
We also provide sensitivity checks on $\eta=1,3,5$ with fixed
$\xi=11$ and $\omega=3$ and $\xi=12$ and $\omega=3$ in Figure~\ref{fig:bar_perform}.
The figure shows that under the same location parameter and scale
parameter, $\eta=3$ and $\eta=5$ usually yield better performance than $\eta=1$
among all datasets.


\subsubsection{Distribution Analysis}

Figure~\ref{fig:distribution} compares the learned distribution of the estimator $\hat{x}_{uij}$  for each dataset to the corresponding skew normal distribution under the setting of $\xi=11$ and $\omega=3$.
Note that the learned distribution is generated from the training data with the model trained on the hyperparameters same as the above setting, i.e., $\xi=11, \omega=3$, and $\eta=3$ (the first row) or $\eta=5$ (the second row).
From the figure, we observe that the learned distributions are with similar shapes to the right-skewed normal distributions, especially under the case that $\eta=5$.
It is worth noting that as Skew-OPT does not directly constrain the distribution, there is by nature no guarantee on the shape of the learned distributions.
Moreover, except for the maximization to the likelihood function in the objective function (i.e., the first term in the objective), Skew-OPT also involves a regularization term; as a result, it is nature that the learned distributions do not exactly fit the skew normal distributions with the same $\xi$ and $\omega$. 
Even so, from Figure~\ref{fig:distribution}, we observe that the learned distributions for all datasets are all right-skewed, which corresponds to the statement in Lemma~\ref{lm1} and the AUC analogies in Section~\ref{sec:AUC}.
On the other hand, Figure~\ref{fig:diffxi} shows the learned distributions when adopting different location parameters~$\xi$ but with fixed $\omega=2$ and $\eta=3$. 
As shown in the figure, pushing $\xi$ to be a larger value indeed moves the distribution to the right, thereby increasing the possibility of $\hat{x}_{uij}>0$ and thus the potential to boost the recommendation performance.
\begin{figure}
\begin{center}
\includegraphics[width=0.43\textwidth]{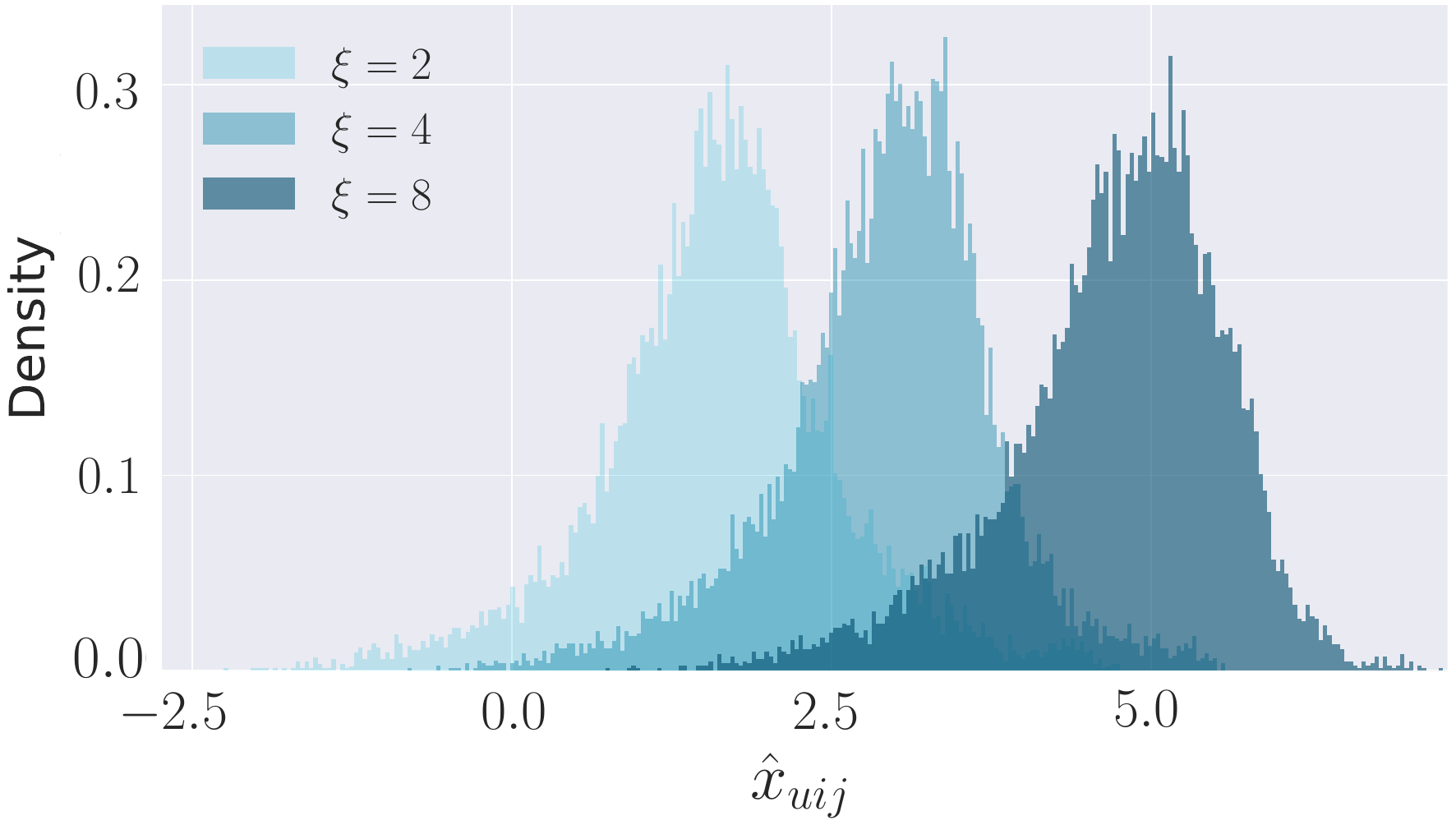}
\vspace{-0.2cm}
\caption{Learned distributions with different location parameters ($\omega=2$ and $\eta=3$).}
\label{fig:diffxi}
\end{center}
\end{figure}

\section{Conclusions}

This paper proposes an unconventional optimization criterion, Skew-Opt, that leverages features of the skew normal distribution to better model the problem of personalized recommendation. 
Specifically, the developed criterion is parameterized with three hyperparameters, thereby providing additional degrees of freedom for ranking optimization.
We further present theoretical insights on the relation between the maximization of Skew-OPT and the shape parameter in the skew normal distribution along with the skewness as well as the asymptotic results of the criterion to AUC maximization.
Experimental results show that models trained with the proposed Skew-OPT yield consistently the best recommendation performance on all tested datasets.
In addition, the sensitivity and distribution analyses not only provide valuable and practical insights for choosing the hyperparameters but also attest the importance of the characteristics of the learned distribution to the recommendation performance.
In sum, this work is the first that explicitly considers the distribution of the estimator for recommendation algorithms; exploring the way to shape the estimator distribution should be of great potential to boost recommendation performance and is an interesting future research direction worth to further investigate.

\bibliographystyle{acm}
\bibliography{paper}

\end{document}